
\documentclass[twocolumn,10pt]{asme2ej}

\usepackage{epsfig}
\newtheorem{thm}{Theorem}[section]
\newtheorem{cor}{Corollary}[section]
\newtheorem{rem}{Remark}[section]
%

\title{On the Post-Peak Structural Response due to Softening with Localization}

\author{Hui-Hui Dai \thanks{Address all correspondence to this author.}
    \affiliation{
    Department of Mathematics\\
   and Liu Bie Ju Centre  \\
   for Mathematical Sciences,  \\
   City University of Hong Kong,\\
    83 TatChee Avenue,\\
     Kowloon Tong, Hong Kong\\
    Email: mahhdai@cityu.edu.hk\\Tel:
+852 27888660; fax: +852 27888561.
    }
}

\author{Xiaowu Zhu \\
    \affiliation{ School of Mathematics and Statistics,\\
     Wuhan University, Wuhan 430072, P.R. China\\Department of Mathematics,\\
      City University of Hong Kong,\\
       83 TatChee Avenue,\\
        Kowloon Tong, Hong Kong
    }
}

\author{Zhen Chen\\
    \affiliation{Department of Civil and Environmental Engineering, \\
    University of Missouri-Columbia, Columbia, MO 65211-2200 USA\\
        Department of Engineering Mechanics, \\
        Dalian University of Technology, \\
         Dalian 116024, P.R. China.

    }
}

\begin{document}

\maketitle

\begin{abstract}
{\it An analytical study is taken to investigate the relationship
between material softening and structural softening through the use
of a model problem in one dimension. With general nonlinear
assumptions on the constitutive relations, it turns out that the
governing equations can be viewed as a system of parametric
equations, which couple the size effect and the nonlinear effect.
Compared with the bilinear assumptions in previous literature, we
find that the nonlinear assumptions herein capture more details in
the post-peak structural response. After doing standard mathematical
analysis to the nonlinear equations, we manage to derive necessary
and sufficient conditions for the occurrence of four important
post-peak cases, which are often observed in experiments. In
particular, our analysis reveals that the mechanism of the
snap-through phenomenon is due to the convexity change of the
constitutive curve of the softening part. Mathematical examples are
also given to illustrate the proposed procedures.}
\end{abstract}

%

\section{Introduction}
Strain-softening, i.e., the decrease of stress with the increase of
strain, is such a common phenomenon that has been recorded for a
variety of materials, like concrete, rocks, ceramics, metals, etc.
Bazant et al., \cite{Bazant1984} gave a comprehensive review of this
phenomenon and analyzed its mechanism from a continuum point of
view. Moreover, it is well-known that strain softening is always
accompanied by highly localized deformations of the specimen
(\cite{van1986,Read1984}). Due to the importance of softening
phenomenon in structural safety assessment, many efforts have been
made in the past decades to investigate strain-softening with
localization experimentally, numerically, and analytically, as
reviewed by \cite{Bazant1997,Labuz2007}.

Snap-back may be one of the most interesting and perhaps most common
structural instability phenomena observed in experiments. It shows
that the load-displacement curve displays a positive slope after
attaining the peak load. de Borst \cite{Borst1987} demonstrated the
possibility of snap-back behavior on structural level by means of
two concrete structures: a reinforced concrete and an unreinforced
specimen. In order to simulate the highly localized failure mode in
a strain-softening solid, a modified arc-length control method was
used in that paper. Later, Rots and de Borst \cite{Rots1989} did a
tensile test on concrete specimens and analyzed it by using the
finite element method, with a particular attention on the snap-back
behavior. He et al., \cite{He1990} studied the class \textrm{II}
behavior (snap-back) of rock with a spring model, which was
characterized by non-uniform failure. Unloading-reloading tests were
also conducted in the post failure region in that paper. One of
their results is that, if inelastic strain increases slower than the
elastic strain decreases, rock shows class \textrm{II} behavior.

Jansen et al.\cite{Jansen1997} did an experiment on concrete
cylinders by using the feedback-control method. From two test
series, the stress-displacement behavior for different
height-diameter ratios with normal strength and high strength were
obtained. They found that the pre-peak segment of the
stress-displacement curves agrees well with the pre-peak part of the
stress-strain curves, while the post-peak segment shows a strong
dependence on the geometric size, namely the radius-length ratio.
More specifically, the longer the specimen is, the steeper the
post-peak segment of the stress-displacement curves becomes. The
feedback-control method was also used in Subramaniam et al.
\cite{Subramaniam1998} to test concrete in torsion, and snap-back
was also found in the experiment.

Some analytical studies were also taken to investigate softening
with localization. With the use of a one-dimensional model, Schreyer
and Chen \cite{Schreyer1986} analyzed the snap-back phenomenon and
found the important size effect on the instability. Due to the
simplicity of bilinear assumptions on the constitutive relations,
further features like snap-through were lost in the result, although
in some experiments this feature was observed (see van Vilet and van
Mier \cite{van2000}). The same constitutive relations were also
assumed in Chen et al. \cite{Chen2008} to analyze the stability in
some hierarchical structures. In a more complex setting with certain
nonlinear assumptions on the constitutive relations, Sundara Raja
Iyengar et al. \cite{Sundara2002} took an analytical study. By using
the fictitious crack model (FCM) developed by Hillerborg, they found
the effect of the softening exponent $n$ on the size effect and
snap-back behavior of beams, while the stress-displacement relation
was assumed as a general power law function. Dai et
al.\cite{Dai2008} constructed the analytical solutions for
localizations in a hyperelastic slender cylinder. With the use of
coupled series-asymptotic expansions approach and phase plane
analysis, they solved the partial differential equations and found
that the width of the localization zone depends on the material
parameters in the post-peak region. Further, they showed that there
is a snap-back phenomenon when the radius-length ratio is relatively
small, which agrees well with experimental observations. Dai et
al.\cite{Dai2009} showed a similar result for hyperelastic shape
memory alloys. Gradient theory may be another powerful tool in
dealing with localization of deformation (see Triantafllidis and
Aifantis \cite{Tri1986}). For example, Triantafllidis and
Bardenhagen \cite{Tri1993} investigated the issues of instability
and imperfection sensitivity of the solutions of a boundary value
problem in one dimension. Their results also revealed some important
size effect.

To the authors' knowledge, however, there is not any analytical
study with general nonlinear constitutive relations in the open
literature which explores the role played by the convexity of the
constitutive curve of the softening part and the coupling effect
between this convexity and the size. Also, both snap-back and
snap-through were observed in some experiments, but no analytical
results are available for explaining the transition from snap-back
to snap-through. We shall explore these aspects in this paper. To
gain insight into the post-peak response, we study the same
one-dimensional structure as considered in
\cite{Schreyer1986,Willam1985,Crisfield1982,Schreyer1984}. The
difference is that here we use general nonlinear constitutive
relations, instead of the bilinear ones used in these papers. First,
we set up the stress-strain equations for the structure in the
post-peak region, which are nonlinear as compared with the bilinear
case. After some analysis, we derive the mathematical conditions for
the occurrence of several important curves as frequently observed in
experiments, including the snap-through (which cannot be captured by
the bilinear assumptions). Finally, an example is given to
illustrate these cases, and the post-peak curves are consistent with
our theoretical predictions.
\section{Model Problem}
To simulate post-peak experiments, we consider a structure with a
serial arrangement of intact elastic and strain-softening zones.
This model was used by several researchers, such as
\cite{Willam1985,Crisfield1982,Schreyer1984} in the early years. In
\cite{Schreyer1986}, it was introduced to analyze strain-softening
with bilinear assumptions on the constitutive relations. As shown in
Figure 1(a), the structure is a bar of length $L=a+b$ with a unit
cross-sectional area. That is to say, it is composed of two segments
(segment A with length $a$ and segment B with length $b$). The two
segments are usually described by similar constitutive equations,
and the main difference is that the limit stress for B is slightly
less than that of A. Therefore, if the stress on the structure is
such that the strain in region B exceeds the value at the limit
state, then softening will occur. It is assumed that softening
occurs uniformly over a localized region B under quasi-static
loading.

In order to consider a general nonlinear case, the constitutive
relations for the two regions are set as: the loading and unloading
segments of region A are two arbitrary functions $f_{11}$ and
$f_{12}$ respectively, while the loading and softening segments of
region B are two arbitrary functions $f_{21}$ and $f_{22}$
respectively. Moreover, we assume that the foregoing nonlinear
functions are twice differentiable with $f^{'}_{11}>0$,
$f^{'}_{12}>0$, $f^{'}_{21}>0$, $f^{'}_{22}<0$. The limit stress for
region A, denoted by $\sigma_a$, is assumed to be slightly larger
than that of region B, which is denoted by $\sigma_0$. The details
are shown in Figure 1(b) (where $f_1$ is used to denote both the
pre-peak and the post-peak segments, as region A only experiences
loading or unloading ). As the post-peak curve of the structure is
our main concern, in the following derivation, for simplicity, we
use $f_1$, $f_2$ to denote the post-peak curves of region A and
region B respectively, unless otherwise specified.
\begin{figure}
   \centering
  \begin{minipage}{0.45\textwidth}
     \centering \includegraphics[width=70mm,height=60mm]{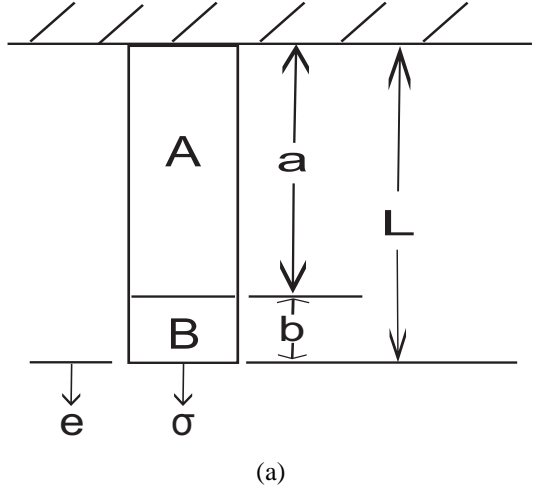}\\(a)
  \end{minipage}
  \begin{minipage}{0.45\textwidth}
     \centering \includegraphics[width=70mm,height=65mm]{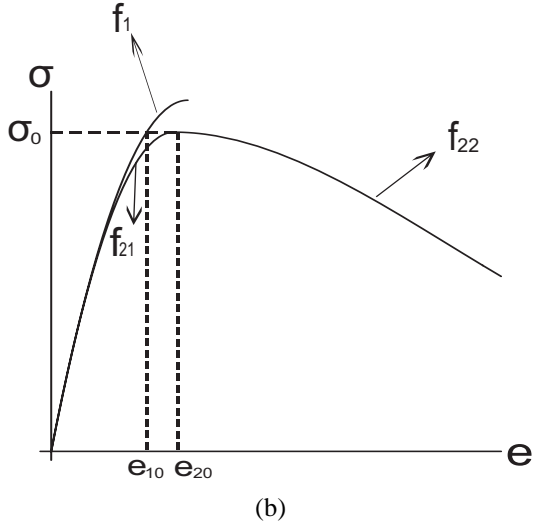}\\(b)
  \end{minipage}
  \caption{(a) One-dimensional model problem; (b)
Stress-strain relations for A and B.}
\end{figure}

As to the post-peak response, for a strain softening material with a
serial setting (cf. Figure 1(a)), region A is in an unloading
process and region B experiences strain softening. Given the values
of strain in regions A and B, say $e_1$ and $e_2$, respectively,
then the composite strain for the complete structure is given by
\begin{equation}
e=\frac{ae_1+be_2}{L}=(1-n)e_1+ne_2,
\end{equation}
where $n=b/L$. Since we consider it as a quasi-static problem, the
composite stress is then given by
\begin{equation}
\sigma=f_1(e_1)=f_2(e_2).
\end{equation}
In fact, one can easily see that (1) and (2) are also true if $f_2$
is used to denote both the pre-peak and post-peak segments. Here,
for the post-peak region, we consider only when $\sigma
\geq\sigma^\star$ ($\sigma^\star$ represents the lowest stress value
at which the bar breaks), and denote $e_{11}$ and $e_{21}$ the
values such that $f_1(e_{11})=f_2(e_{21})=\sigma^\star$. Then, for
the post-peak region, we have $e_1\in [e_{11},e_{10}]$ and
$e_2\in[e_{20},e_{21}]$ (see Figure 1(b) for the definitions of
$e_{10}$ and $e_{20}$). From equation (2), we get
$e_2=f_2^{-1}[f_1(e_1)]$ (or $e_1=f_1^{-1}[f_2(e_2)]$). Thus (1) and
(2) can be transformed into the system
\begin{equation}
\left\{\begin{array}{ll} \sigma=f_1(e_1)\\
e=(1-n)e_1+nf_2^{-1}[f_1(e_1)]\,,
 \end{array} \right.
 \end{equation}
which can be viewed as the parametric equations for the engineering
stress-strain curve. We note that $n$ is actually the width (scaled
by $L$) of the localization zone in the reference configuration, as
material points in region B are in the localization zone in the
post-peak region. Obviously, system (3) couples the size effect and
nonlinear effect.

Now, we differentiate system (3) with respect to $e_1$ to obtain
\begin{equation}
\left\{\begin{array}{ll} \frac{d\sigma}{de_1}=f_1^{'}(e_1)\\
\frac{de}{de_1}=(1-n)+n\frac{f_1^{'}(e_1)}{f_2^{'}(e_2)}\,.
\end{array} \right.
 \end{equation}
If $(1-n)f_2^{'}(e_2)+nf_1^{'}(e_1)\neq0$, we have
\begin{equation}
\frac{d\sigma}{de}=\frac{f_1^{'}(e_1)f_2^{'}(e_2)}{(1-n)f_2^{'}(e_2)+nf_1^{'}(e_1)}\,\,,
\end{equation}
\begin{equation}
\frac{d^2\sigma}{de^2}=\frac{n[f_1^{'}(e_1)]^3f_2^{''}(e_2)+(1-n)f_1^{''}(e_1)[f_2^{'}(e_2)]^3}{[nf_1^{'}(e_1)+(1-n)f_2^{'}(e_2)]^3}.
\end{equation}
In order to analyze the sign of (5), we define
\begin{equation}
g(e_1,e_2;n)=nf_1^{'}(e_1)+(1-n)f_2^{'}(e_2)\,,
\end{equation}
\begin{equation}
m(e_1,e_2)=\frac{f_2^{'}(e_2)}{f_2^{'}(e_2)-f_1^{'}(e_1)}\,,
\end{equation}
\begin{equation}
G(e_1,e_2)=[f_1^{'}(e_1)]^2f_2^{''}(e_2)-[f_2^{'}(e_2)]^2f_1^{''}(e_1)\,.
\end{equation}
The above three functions can be viewed as functions of either $e_1$
or $e_2$ by the relations between them as shown above. We note that
$m(e_1,e_2)$ depends on the slopes (the first-order derivatives) of
the constitutive curves, $G(e_1,e_2)$ depends on the convexities
(the second-order derivatives) of the constitutive curves and
$g(e_1,e_2;n)$ depends on the size parameter $n$. We also point out
that $m(e_1,e_2)=n$ is equivalent to $g(e_1,e_2;n)=0$. We shall see
that $n$ has an important influence on the structural response.
\section{Post-peak Curves and Conditions}
Assuming that $f_2^{'}(e_{20})=0$ and $f_2^{''}(e_{20})<0$, that is
to say $e_{20}$ is a local maximum of $f_2$. Then, for different
$f_1$, $f_2$ and $n$ the four cases shown in Figure 2 can arise.
Next, we shall establish the conditions for each case.
\begin{figure}
 \begin{center} 
  \includegraphics[width=8.5cm]{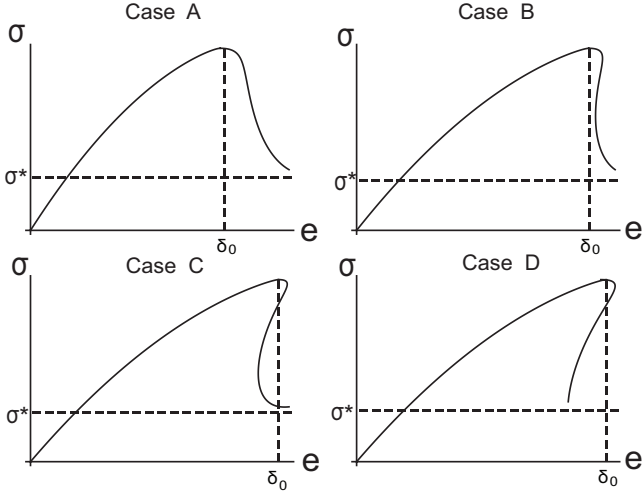}\\
  \caption{Four cases of the post-peak engineering stress-strain curves}\label{figure 2}
\end{center}
\end{figure}
\subsection{Case A: Stable Softening}
For the structure to be in stable softening (i.e., $d\sigma/de<0$),
from (5), it is easy to see the necessary and sufficient condition
is
\begin{equation}
g(e_1,e_2;n)=nf_1^{'}(e_1)+(1-n)f_2^{'}(e_2)>0, \textmd{for}\,
e_1\in [e_{11},e_{10}].
\end{equation}
From which, we get
\begin{equation}
\frac{b}{L}=n>n_0:=\max\limits_{e_1\in
[e_{11},\,e_{10}]}m(e_1,e_2)=\max\limits_{e_2\in
[e_{20},\,e_{21}]}m(e_1,e_2)\,\,.
\end{equation}
\subsection{Cases B and C: Snap-Through}
Now, we focus on the interval $n\in(0,n_0]$. There are several
possibilities, as shown in Figure 2. Before analyzing the remaining
cases, we point out that the initial part (i.e., the part close to
the peak) of the post-peak curve is in a state of stable softening
for the conditions imposed on $f_1$ and $f_2$. In fact,
$g(e_{10},e_{20})=nf_1^{'}(e_{10})>0$ , and at the peak point, we
have $d\sigma/de<0$. By continuity, there must be a part of the
post-peak curve for $e$ close to $\delta_0$
($\delta_0=(1-n)e_{10}+ne_{20}$) in which $d\sigma/de<0$. Also, at
$e_1=e_{10},e_2=e_{20}$, we have
$d^2\sigma/de^2=f_2^{''}(e_{20})/n^2<0.$ This would be useful for
our later derivation.

We see that each of Case B and Case C represents a snap- through
case. Here, snap-through is defined to be the point at which the
slope of the force-displacement curve becomes infinite. As a result,
when displacement (elongation) crosses this point, the force may
experience a sudden drop. Firstly, let us consider the similarities
between Case B and Case C. There are two turning points (the points
at which $d\sigma/de=\infty$) in both curves. From (5), it can be
seen that this is equivalent to that the equation
\begin{equation}
g(e_1,e_2;n)=nf_1^{'}(e_1)+(1-n)f_2^{'}(e_2)=0
\end{equation}
has two roots, say $e_{11}^{*}$ and $e_{12}^{*}$
\,($e_{11}^{*}>e_{12}^{*}$). The following theorem provides a
necessary and sufficient condition for the occurrence of the two
turning points.
\begin{thm}  If two turning points occur, then the
function $G(e_1,e_2)$ must change sign at least once for $e_1\in
[e_{11},e_{10}]$. On the other hand, if the sign of $G(e_1,e_2)$
changes only once for $e_1\in [e_{11},e_{10}]$, then for any $n\in
[n_1,n_0]$, two turning points occur, where $n_1=m(e_{11},e_{21})$.
\end{thm}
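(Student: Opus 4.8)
The plan is to analyze the roots of $g(e_1,e_2;n)=0$ by reformulating the equation $g=0$ in terms of the function $m$. Since $m(e_1,e_2)=n$ is equivalent to $g(e_1,e_2;n)=0$ (as noted after equation (9)), the two turning points $e_{11}^*, e_{12}^*$ are precisely the solutions of $m(e_1,e_2)=n$ for $e_1\in[e_{11},e_{10}]$. Thus the existence of two turning points for a given $n$ is equivalent to the horizontal line at height $n$ intersecting the graph of $e_1\mapsto m(e_1,e_2(e_1))$ in exactly two points. The strategy, then, is to study the monotonicity of $m$ along the post-peak branch, since the number of intersections of a horizontal line with a graph is governed by how many times the graph turns around, i.e., by the sign changes of $m'$.

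First I would compute $dm/de_1$ along the curve, using the chain rule together with $de_2/de_1 = f_1'(e_1)/f_2'(e_2)$ from equation (4). The key computation is to show that the numerator of $dm/de_1$ is proportional to $G(e_1,e_2)$ up to a factor of definite (positive) sign. Indeed, differentiating $m = f_2'/(f_2'-f_1')$ and clearing denominators should produce an expression whose sign is controlled entirely by $[f_1'(e_1)]^2 f_2''(e_2) - [f_2'(e_2)]^2 f_1''(e_1)$, which is exactly $G(e_1,e_2)$ as defined in equation (9). Establishing this proportionality $\mathrm{sign}(dm/de_1)=\pm\,\mathrm{sign}(G)$ with a fixed sign is the crux of linking the convexity function $G$ to the turning-point count, and I expect this to be the main technical obstacle; care is needed because $f_2'-f_1'$ could vanish or change sign, so the denominator's behavior must be tracked (note $f_1'>0$, $f_2'<0$ only in softening, so $f_2'-f_1'<0$ throughout, giving a clean sign).

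Next I would prove the two directions separately. For the forward (necessary) direction, suppose two turning points $e_{11}^* > e_{12}^*$ occur, so $m$ attains the value $n$ at two distinct points. Between them $m$ must have a critical point by Rolle's theorem (or simply because a continuous function taking the same value twice must turn around), hence $dm/de_1=0$ somewhere in between, which by the proportionality forces $G=0$ there; combined with the strict monotonicity of $m$ near the endpoint $e_{10}$ established in the preamble (where $d^2\sigma/de^2<0$ reflects a definite sign of the slope of $m$), $G$ must change sign at least once on $[e_{11},e_{10}]$. For the converse (sufficient) direction, assume $G$ changes sign exactly once; then $dm/de_1$ changes sign exactly once, so $m$ is unimodal on $[e_{11},e_{10}]$ — increasing then decreasing (or the reverse, fixed by the boundary data $g(e_{10},e_{20})>0$ and the sign of $f_2''(e_{20})<0$). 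A unimodal function with a single interior extremum takes any value strictly between its endpoint values and its extremal value exactly twice, so for every $n$ in the appropriate range the line $m=n$ meets the graph twice.

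Finally I would pin down the range of admissible $n$. The lower endpoint $n_1=m(e_{11},e_{21})$ is the value of $m$ at the left endpoint of the interval, and $n_0$ is the maximum of $m$ from equation (11); for a unimodal $m$ whose maximum is $n_0$, any horizontal level strictly between the endpoint value and the peak yields two crossings, giving two turning points precisely for $n\in[n_1,n_0]$ as claimed. The endpoints require a brief boundary analysis: at $n=n_0$ the two turning points coalesce at the maximizer, and at $n=n_1$ one turning point sits at the endpoint $e_{11}$. I would verify these boundary cases directly from the definitions of $n_0$ and $n_1$ to close the argument cleanly.
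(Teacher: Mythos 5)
Your sufficiency direction is essentially the paper's own proof: the paper likewise converts the turning-point condition $g(e_1,e_2;n)=0$ into the level-set problem $m(e_1,e_2)=n$, computes $dm/de_2=-G(e_1,e_2)/\bigl([f_2'(e_2)-f_1'(e_1)]^2f_1'(e_1)\bigr)$, uses $G(e_{10},e_{20})=[f_1'(e_{10})]^2f_2''(e_{20})<0$ to get $dm/de_2>0$ at the peak, and concludes that $m$ is unimodal (its Figure 3), so that every level $n\in[n_1,n_0]$ is crossed twice. Your attention to the degenerate endpoints $n=n_0$ (where the two roots coalesce at the maximizer) and $n=n_1$ (root at the boundary) is, if anything, more careful than the paper's treatment.

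The necessity direction, however, has a genuine gap. Rolle's theorem applied between the two roots of $m=n$ gives only a point $\xi$ with $m'(\xi)=0$, i.e., $G=0$ there; the theorem demands that $G$ \emph{change sign}, and vanishing at one point is strictly weaker. Your proposed repair --- combining $G(\xi)=0$ with the strict sign of $G$ near the peak --- is not a valid inference: a profile with $G\le 0$ throughout that merely touches zero at $\xi$ satisfies both facts yet changes sign nowhere. (Such a $G$ would force $m$ to be monotone, which is indeed incompatible with exactly two roots --- but that incompatibility is precisely the argument you still owe; Rolle does not supply it.) The fix is short: with exactly two roots $a<b$, the function $m-n$ keeps one strict sign on the open interval $(a,b)$; taking an interior extremum $c$ of $m$ on $(a,b)$ and applying the mean value theorem on $[a,c]$ and on $[c,b]$ produces points where $m'$ is strictly positive and strictly negative, hence (the proportionality factor being of fixed sign) points where $G<0$ and where $G>0$, and continuity then yields the sign change. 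It is worth noting that the paper takes a different route that avoids this issue altogether: from the two turning points it deduces $g(e_{11},e_{21};n)>0$, hence that $f_2'(e_{21})/f_1'(e_{11})$ strictly exceeds $\min_{e_1\in[e_{11},e_{10}]}f_2'(e_2)/f_1'(e_1)$, and then applies the mean value theorem to $f_2'/f_1'$ between its minimizer and $e_{21}$, landing directly on a point where $G>0$ \emph{strictly}; combined with $G<0$ at the peak this gives the sign change at once. Once patched as above, your version is arguably the more direct argument, but as written the key step does not go through.
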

\begin{proof} If two turnings occur, then the sign of the function
$g(e_1,e_2)$ changes twice (cf. Case B or Case C in Figure 2). So,
we get
\begin{equation}
g(e_{11},e_{21};n)=nf_1^{'}(e_{11})+(1-n)f_2^{'}(e_{21})>0\,.
\end{equation}
Thus,
\begin{equation}
\frac{f_2^{'}(e_{21})}{f_1^{'}(e_{11})}> -\frac{n}{1-n}\,.
\end{equation}
Since
\begin{equation}
-\frac{n}{1-n}\geq-\frac{n_0}{1-n_0}\,,
\end{equation}
\begin{equation}-\frac{n_0}{1-n_0}=\min\limits_{e_1\in
[e_{11},\,e_{10}]}\frac{f_2^{'}(e_2)}{f_1^{'}(e_1)}\,,
\end{equation}
we have
\begin{equation}\frac{f_2^{'}(e_{21})}{f_1^{'}(e_{11})}>\min\limits_{e_1\in
[e_{11},\,e_{10}]}\frac{f_2^{'}(e_2)}{f_1^{'}(e_1)}.
\end{equation}
Suppose that the minimum is attained at $e_{12}$ (the corresponding
$e_2$ is given by $f_2^{-1}[f_1(e_{12})]=e_{22}$). That is,
$f_2^{'}(e_{22})/f_1^{'}(e_{12})=\min\limits_{e_1\in
[e_{11},\,e_{10}]}\{f_2^{'}(e_2)/f_1^{'}(e_1)\}$. If we view
$f_2^{'}(e_2)/f_1^{'}(e_1)$ as a function of $e_2$, then according
to the Lagrange Mean Value theorem, there exists an $\xi\in
(e_{22},e_{21})$ such that
\begin{equation}
\frac{d}{de_2}[\frac{f_2^{'}(e_2)}{f_1^{'}(e_1)}]|_{e_2=\xi}=
\frac{\frac{f_2^{'}(e_{21})}{f_1^{'}(e_{11})}-\frac{f_2^{'}(e_{22})}{f_1^{'}(e_{12})}}{e_{21}-e_{22}}>0\,.
\end{equation}
As
\begin{eqnarray}
\frac{d}{de_2}[\frac{f_2^{'}(e_2)}{f_1^{'}(e_1)}]&=&\frac{[f_1^{'}(e_1)]^2f_2^{''}(e_2)-[f_2^{'}(e_2)]^2f_1^{''}(e_1)}{[f_1^{'}(e_1)]^3}\,,
\end{eqnarray}
we have
\begin{equation}
[f_1^{'}(\zeta)]^2f_2^{''}(\xi)-[f_2^{'}(\xi)]^2f_1^{''}(\zeta)>0,\,
 \textmd{where}\,\,\,\zeta=f_1^{-1}[f_2(\xi)]\,,
\end{equation}
which implies that $G(\zeta,\xi)>0$. Since
$G(e_{10},e_{20})=[f_1^{'}(e_{10})]^2f_2^{''}(e_{20})<0$, \,the sign
of $G(e_1,e_2)$ changes for $e_1\in [e_{11},e_{10}]$.

On the other hand, suppose that for $e_1\in [e_{11},e_{10}]$,
$G(e_1,e_2)$ changes sign once. Now, we consider the function
$m(e_1,e_2)$ (cf., (8); we regard it as a function of $e_2$). We now
show that for any $n\in[n_1,n_0]$, equation (12) has two roots. In
fact, it is easy to get
\begin{equation}
\frac{dm}{de_2}=\frac{-G(e_1,e_2)}{[f_2^{'}(e_2)-f_1^{'}(e_1)]^2f_1^{'}(e_1)}\,\,.
\end{equation}
Thus, $dm/de_2$ also changes sign once. We also note that $n_0$ is a
maximum of $m(e_1,e_2)$, say, attained at $e_{2n}$. Then,
$dm/de_2=0$ at $e_2=e_{2n}$. On the other hand,
\begin{equation}\frac{dm}{de_2}|_{e_2=e_{20}}=\frac{-G(e_{10},e_{20})}{[f_2^{'}(e_{20})-f_1^{'}(e_{10})]^2f_1^{'}(e_{10})}>0.
\end{equation}
So, the curve $m(e_1,e_2)$ should have the characteristics shown in
Figure 3.
\begin{figure}
 \begin{center} 
  \includegraphics[width=8cm]{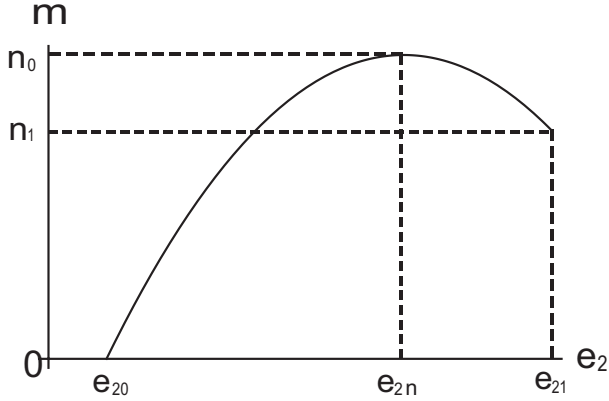}\\
  \caption{Schematic illustration of $m(e_1,e_2)$}\label{figure 2}
\end{center}
\end{figure}
Thus, for any $n\in [n_1,n_0]$, $n=m(e_1,e_2)$ has two roots, which
then implies that $g(e_1,e_2;n)$ has two zeros. This completes the
proof of the second part of the theorem.
\end{proof}
If $f_1$ is linear in the post-peak region, then
$G(e_1,e_2)=[f_1^{'}(e_1)]^2f_2^{''}(e_2)$. Consequently, the sign
of $G(e_1,e_2)$ depends on the convexity of $f_2$. We have the
following corollary:
\begin{cor} For $f_1$ being linear in the post-peak region, if two turning points occur, then the convexity of $f_2$
must change at least once for $e_1\in [e_{11},e_{10}]$. On the other
hand, if the convexity of $f_2$ changes once, then for any $n\in
[n_1,n_0]$, two turning points occur in the post-peak curve.
\end{cor}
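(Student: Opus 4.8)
The plan is to derive the corollary directly from Theorem~3.1 by showing that, under the linearity hypothesis, the sign-change behavior of $G$ reduces exactly to the convexity change of $f_2$. No new analysis is needed; the work is entirely in translating the condition on $G$ into a condition on $f_2''$.

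First I would use the hypothesis that $f_1$ is linear in the post-peak region to set $f_1''(e_1)\equiv 0$. Substituting this into the definition (9) of $G$ annihilates the second term and leaves $G(e_1,e_2)=[f_1^{'}(e_1)]^2 f_2^{''}(e_2)$, exactly as recorded in the text preceding the statement. The key observation is then that the prefactor $[f_1^{'}(e_1)]^2$ is strictly positive, since the standing assumption $f_1^{'}>0$ rules out any zero of $f_1^{'}$. Hence the sign of $G(e_1,e_2)$ coincides pointwise with the sign of $f_2^{''}(e_2)$: one has $G>0$, $G=0$, $G<0$ precisely when $f_2^{''}>0$, $f_2^{''}=0$, $f_2^{''}<0$ respectively.

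In particular, $G$ changes sign on $[e_{11},e_{10}]$ if and only if the convexity of $f_2$ (the sign of $f_2^{''}$) changes there, and the number of sign changes of $G$ equals the number of convexity changes of $f_2$. Finally I would invoke the two directions of Theorem~3.1 and rewrite each statement about $G$ as the corresponding statement about the convexity of $f_2$ using this equivalence: the forward direction yields that two turning points force at least one convexity change of $f_2$, and the converse direction yields that a single convexity change produces two turning points for every $n\in[n_1,n_0]$.

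There is no genuine obstacle here; the corollary is an immediate specialization. The only point requiring care is confirming that the strict positivity of $[f_1^{'}]^2$ makes the sign correspondence exact, so that ``one sign change of $G$'' and ``one convexity change of $f_2$'' are truly interchangeable and the prefactor introduces no spurious zeros or sign changes. Once this is checked, both implications transfer verbatim from the theorem.
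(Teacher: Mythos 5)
Your proposal is correct and follows essentially the same route as the paper: the text immediately preceding the corollary notes that linearity of $f_1$ reduces $G$ to $[f_1^{'}(e_1)]^2 f_2^{''}(e_2)$, whose sign is that of $f_2^{''}$ since $f_1^{'}>0$, and the corollary is then read off from the two directions of Theorem~3.1. Your explicit check that the strictly positive prefactor $[f_1^{'}]^2$ creates no spurious zeros or sign changes is exactly the (implicit) content of the paper's remark, so the two arguments coincide.
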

\begin{rem}
Usually, $f_{12}^{''}(e_1)$ should be small ($f_{12}^{''}(e_1)=0$
for $f_1$ being linear). Thus, the sign of $G(e_1,e_2)$ is primarily
determined by the sign of $f_{22}^{''}(e_2)$. So, one may say that a
necessary condition for the snap-through (i.e., there are two
turning points in the post-peak curve) is the change of the
convexity of the constitutive curve of the softening part.
\end{rem}
Now, let us consider the differences between Case B and Case C.
Recall that $e_{11}^{*}$ and $e_{12}^{*}$($e_{11}^{*}>e_{12}^{*}$)
are the two roots of equation (13). For Case B, we have
\begin{equation}
\delta_2^{*}=(1-n)e_{12}^{*}+nf_2^{-1}[f_1(e_{12}^{*})]>\delta_0\,\,.
\end{equation}
 While for Case C, we have
\begin{equation}
\delta_2^{*}=(1-n)e_{12}^{*}+nf_2^{-1}[f_1(e_{12}^{*})]\leq\delta_0\,.
\end{equation}
In other words, in Case C the post-peak curve has entered the
pre-peak region, while in Case B it has not. We find that, for given
$f_1$ and $f_2$, there are some conditions for the occurrence of
Case C. For simplicity, we assume that the sign of $G(e_1,e_2)$
changes once, say,
\begin{equation}
\left\{\begin{array}{ll} G(e_1,e_2)<0,\,\,e_2\in [e_{20},e_{22})\,,\\
G(e_{12},e_{22})=0\,,\\
G(e_1,e_2)>0,e_2\in (e_{22},e_{21}]\,.
\end{array} \right.
 \end{equation}
The following theorem provides a necessary and sufficient condition
for Case C.
\begin{thm} Under assumption (25) and
$n\in[n_1,n_0]$, a necessary and sufficient condition for the
occurrence of Case C is
\begin{equation}
f_2^{'}(e^{*}_{22})(e^{*}_{22}-e_{20})\geq
f_1^{'}(e^{*}_{12})(e^{*}_{12}-e_{10})\,.
\end{equation}
\end{thm}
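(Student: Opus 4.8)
The plan is to show that condition (27) is nothing more than the Case C inequality (24), rewritten with the help of the turning-point equation (13); since every manipulation is reversible, this will simultaneously give necessity and sufficiency. Throughout I take the standing hypotheses of the theorem: assumption (25) together with $n\in[n_1,n_0]$, which by Theorem 3.1 guarantees that equation (13) has exactly the two roots $e_{11}^{*}>e_{12}^{*}$, so that $e_{12}^{*}$ and its partner $e_{22}^{*}=f_2^{-1}[f_1(e_{12}^{*})]$ are well defined. I would first record why the smaller root $e_{12}^{*}$ is the relevant one: from (4) one has $de/de_1=g(e_1,e_2;n)/f_2^{'}(e_2)$ with $f_2^{'}<0$, so $de/de_1$ has the sign opposite to $g$. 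As $e_1$ decreases from $e_{10}$, the strain $e$ first rises (the stable part, $g>0$), then falls between the two zeros of $g$, then rises again, so the value $\delta_2^{*}$ attained at $e_{12}^{*}$ is a local minimum of $e$. Comparing this minimum with the peak strain $\delta_0$ is precisely the Case B / Case C dichotomy, which explains why (24) is stated at $e_{12}^{*}$.

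First I would expand (24). Writing $\delta_2^{*}=(1-n)e_{12}^{*}+ne_{22}^{*}$ and $\delta_0=(1-n)e_{10}+ne_{20}$, the inequality $\delta_2^{*}\leq\delta_0$ is equivalent to
\begin{equation}
(1-n)(e_{12}^{*}-e_{10})+n(e_{22}^{*}-e_{20})\leq 0 .
\end{equation}
Here $e_{12}^{*}-e_{10}<0$ while $e_{22}^{*}-e_{20}>0$, so the inequality balances a stabilising term against a destabilising one.

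Next I would invoke that $e_{12}^{*}$ is a root of (13), namely $nf_1^{'}(e_{12}^{*})+(1-n)f_2^{'}(e_{22}^{*})=0$. Since $f_1^{'}(e_{12}^{*})>0$ and $0<n<1$, this solves as $n=(1-n)\bigl(-f_2^{'}(e_{22}^{*})/f_1^{'}(e_{12}^{*})\bigr)$, the bracket being positive because $f_2^{'}<0$. Substituting this expression for $n$ into the displayed inequality and factoring out the positive quantity $(1-n)$ reduces it to $(e_{12}^{*}-e_{10})-\bigl(f_2^{'}(e_{22}^{*})/f_1^{'}(e_{12}^{*})\bigr)(e_{22}^{*}-e_{20})\leq 0$; multiplying through by $f_1^{'}(e_{12}^{*})>0$ and transposing then yields exactly (27). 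Reading the chain backwards gives the converse, so the two statements are equivalent.

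The content is elementary algebra, so the real effort is bookkeeping rather than a conceptual hurdle. The one point that needs genuine care is the sign tracking when clearing the denominator $f_1^{'}(e_{12}^{*})$ and when dividing by $(1-n)$: one must keep in mind that $f_2^{'}<0$, $f_1^{'}>0$, and $0<n<1$, so that no inequality is inadvertently reversed. The only structural input beyond this is Theorem 3.1, invoked at the outset to ensure that the two turning points $e_{12}^{*},e_{22}^{*}$ exist for every $n\in[n_1,n_0]$; without that guarantee the quantities in (27) would not even be defined.
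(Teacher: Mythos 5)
Your proof is correct and takes essentially the same route as the paper: both expand the Case C condition (24) into $(1-n)(e_{12}^{*}-e_{10})+n(e_{22}^{*}-e_{20})\leq 0$, combine it with the turning-point relation $nf_1^{'}(e^{*}_{12})+(1-n)f_2^{'}(e^{*}_{22})=0$ (the paper's (28)), and note the algebra is reversible so that necessity and sufficiency follow together. Your write-up merely supplies the substitution and sign-tracking details (and the motivating remark that $\delta_2^{*}$ is a local minimum of $e$) that the paper compresses into ``we can get (26) immediately.''
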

\begin{proof}
From (24), we have
\begin{equation}
(1-n)e^{*}_{12}+ne^{*}_{22}\leq
(1-n)e_{10}+ne_{20},\,\textmd{where}\,\,\,
e^{*}_{22}=f_2^{-1}[f_1(e_{12}^{*})].
\end{equation}
On the other hand,
\begin{equation}
g(e^{*}_{12},e^{*}_{22};n)=nf_1^{'}(e^{*}_{12})+(1-n)f_2^{'}(e^{*}_{22})=0\,.
\end{equation}
From the above two equations, we can get (26) immediately. Also,
from (26) and (28) one can immediately deduce (27). This completes
the proof.
\end{proof}
 Assumption (25) can be made even more complicated, in that case we
 may draw the fairly complicated post-peak curves in
 \cite{Rots1989}, which were obtained by numerical methods. It should be pointed out
that inequality (26) is another requirement among $f_1$, $f_2$ and
$n$. For given $f_1$ and $f_2$, it provides another bound (say
$n_2$) for $n$, since $e^*_{22}$ and $e^*_{12}$ are related to $n$
as equation (28) shows.

For $f_1$ being linear in the post-peak region, it is easy to show
that (26) becomes
\begin{equation}
f_2^{'}(e^{*}_{22})\geq
\frac{f_2(e^{*}_{22})-f_2(e_{20})}{e^{*}_{22}-e_{20}}.
\end{equation}
This implies that for the constitutive relation $\sigma=f_2(e_2)$,
the secant line joining the point $e^*_{22}$ and the peak $e_{20}$
should be steeper than the tangent line at $e^*_{22}$ (see Figure
4). Combined with Corollary 3.1, we have the following corollary:
\begin{cor}
For $f_1$ being a linear function, if the convexity of $f_2$ changes
once, then a necessary and sufficient condition for the occurrence
of Case C is $n\in [n_1,n_0]$, and inequality (29) holds.
\end{cor}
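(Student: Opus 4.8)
The plan is to assemble Corollary 3.2 from three ingredients that are already in place, so the work is primarily one of bookkeeping rather than new analysis. The statement concerns Case~C under the specialization that $f_1$ is linear in the post-peak region, and it asserts that, provided the convexity of $f_2$ changes exactly once, Case~C occurs if and only if $n\in[n_1,n_0]$ together with inequality~(29). First I would invoke Corollary~3.1, which tells us that when $f_1$ is linear and the convexity of $f_2$ changes once, the range $n\in[n_1,n_0]$ is precisely the condition guaranteeing that two turning points occur. This handles the ``two turning points'' prerequisite that distinguishes the snap-through cases (B and C) from the stable-softening case (A), and it fixes the admissible interval for $n$.

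Next I would apply Theorem~3.2, which, under assumption~(25) (convexity of $f_2$ changing once) and $n\in[n_1,n_0]$, supplies the necessary and sufficient condition~(26), namely $f_2^{'}(e^{*}_{22})(e^{*}_{22}-e_{20})\geq f_1^{'}(e^{*}_{12})(e^{*}_{12}-e_{10})$, for the second turning point to fall inside (or on the boundary of) the pre-peak region --- i.e.\ for the curve to be in Case~C rather than Case~B. The remaining step is purely algebraic: I would show that when $f_1$ is linear, condition~(26) collapses to the secant/tangent inequality~(29). Since $f_1$ linear means $f_1^{'}$ is a constant slope, say $f_1^{'}\equiv s$, the right-hand side $f_1^{'}(e^{*}_{12})(e^{*}_{12}-e_{10})$ equals $s(e^{*}_{12}-e_{10})=f_1(e^{*}_{12})-f_1(e_{10})$. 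Because $f_1(e^{*}_{12})=f_2(e^{*}_{22})$ and $f_1(e_{10})=\sigma_a=f_2(e_{20})$ (both sides meet the common stress at the relevant points), this right-hand side rewrites as $f_2(e^{*}_{22})-f_2(e_{20})$; dividing~(26) through by the positive quantity $e^{*}_{22}-e_{20}$ then yields exactly~(29).

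Putting these together gives the biconditional: Case~C occurs iff two turning points occur (Corollary~3.1: $n\in[n_1,n_0]$) and the second turning point lies in the pre-peak region (Theorem~3.2 specialized via the computation above to~(29)). The proof is therefore a short synthesis rather than an independent derivation, and I would present it as such, explicitly citing Corollary~3.1 for the $n$-interval and Theorem~3.2 for the inequality, then executing the linearization of~(26).

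The step I expect to be the main obstacle --- though ``obstacle'' overstates it --- is justifying the identification $f_1(e_{10})=f_2(e_{20})$ cleanly, since $e_{10}$ and $e_{20}$ are the peak-related points of the two branches and one must be careful that the common stress value at these parametric points is what makes the telescoping of the right-hand side of~(26) into $f_2(e^{*}_{22})-f_2(e_{20})$ go through; this uses the stress-matching relation~(2), $\sigma=f_1(e_1)=f_2(e_2)$, evaluated consistently at the turning point and at the peak. Everything else is a direct appeal to the two cited results.
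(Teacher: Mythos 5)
Your proposal is correct and follows essentially the same route as the paper: Theorem 3.2 supplies condition (26), Corollary 3.1 supplies the interval $n\in[n_1,n_0]$ for the two turning points, and your telescoping of the right-hand side of (26) via $f_1'(e_{12}^{*})(e_{12}^{*}-e_{10})=f_1(e_{12}^{*})-f_1(e_{10})=f_2(e_{22}^{*})-f_2(e_{20})$ followed by division by $e_{22}^{*}-e_{20}>0$ is exactly the computation the paper leaves implicit in the sentence preceding the corollary. One trivial slip: the common peak stress in the identity $f_1(e_{10})=f_2(e_{20})$ is $\sigma_0$, not $\sigma_a$ (region A unloads before reaching its own limit stress), but this labeling error does not affect the argument.
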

\begin{figure}
 \begin{center} 
  \includegraphics[width=8cm]{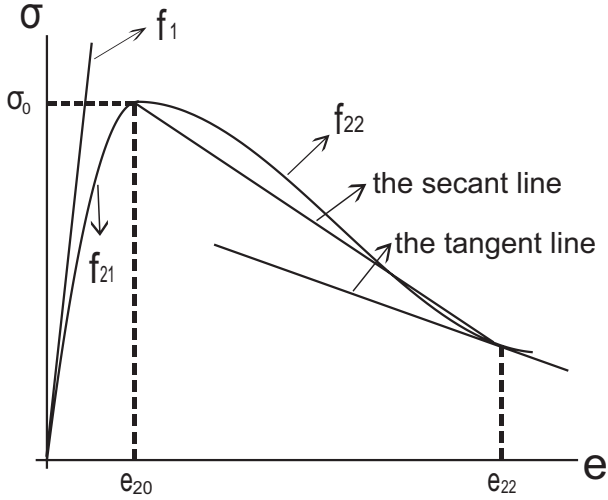}\\
  \caption{Diagrammatic representation of inequality (29) }\label{figure 4}
\end{center}
\end{figure}
\subsection{Case D: Snap-back}
In Case D, there is a snap-back in the structural response. Here, we
say that snap-back occurs when the slop of the force-displacement
curve becomes positive and remains positive in the post-peak
response. Obviously, in this case there is only one turning point
(see Figure 2). The following theorem provides a critical $n$ for
the occurrence of Case D.
\begin{thm} If $G(e_1,e_2)$ changes sign once (cf.(25))
or does not change sign for $e_1\in [e_{11},e_{10}]$, then for Case
D to occur, a necessary and sufficient condition is $n<n_1$.
\end{thm}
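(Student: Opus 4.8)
The plan is to characterize Case D (snap-back) as the situation where equation (13), $g(e_1,e_2;n)=0$, has exactly one root in the relevant interval, and then translate this into a condition on $n$ using the function $m(e_1,e_2)$. Recall that the snap-back of Case D has a single turning point, in contrast to Cases B and C which each have two. Since $g(e_1,e_2;n)=0$ is equivalent to $m(e_1,e_2)=n$ (as noted after equation (9)), the number of turning points equals the number of solutions of $n=m(e_1,e_2)$ on $e_1\in[e_{11},e_{10}]$. So the entire argument reduces to counting intersections of the horizontal line at height $n$ with the graph of $m$.

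First I would recall the shape of $m(e_1,e_2)$ established in the proof of Theorem 3.1 (and depicted in Figure 3). Under the hypothesis that $G$ changes sign once, formula (22) shows $dm/de_2=-G/\{[f_2'-f_1']^2 f_1'\}$ changes sign exactly once, and (23) gives $dm/de_2|_{e_2=e_{20}}>0$; hence $m$ increases from its value at $e_{20}$ up to the maximum $n_0$ attained at $e_{2n}$, then decreases toward its endpoint value $n_1=m(e_{11},e_{21})$. The same monotone-up-then-down profile holds (a fortiori, with at most one interior critical point) when $G$ does not change sign. The key observation is then purely graphical: for $n\in[n_1,n_0]$ the horizontal line meets this unimodal graph twice (two turning points, i.e.\ Cases B/C), whereas for $n<n_1$ it meets the graph exactly once, on the ascending branch only, giving the single turning point of Case D.

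Concretely, I would argue both directions. For sufficiency, suppose $n<n_1$. Since $m(e_{20})<n_1\le n_0$ and $m$ rises to $n_0$ then falls to $n_1$, the value $n$ is below the descending branch's range $[n_1,n_0]$ but within the ascending branch's range, so $n=m$ has precisely one solution; this lone root is the single turning point, and one checks the slope $d\sigma/de$ is positive beyond it, matching the snap-back definition. For necessity, suppose Case D occurs, i.e.\ exactly one turning point; by Theorem 3.1 two turning points arise for every $n\in[n_1,n_0]$, so Case D forces $n\notin[n_1,n_0]$, and since stable softening (Case A) requires $n>n_0$, the only remaining possibility consistent with a post-peak snap-back is $n<n_1$.

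The main obstacle I anticipate is not the intersection count but pinning down the boundary and endpoint behavior carefully. I must verify that at $n=n_1$ exactly one of the two coalescing roots sits at the endpoint $e_{11}$, so that $n=n_1$ itself belongs to the two-turning-point regime rather than to Case D, justifying the strict inequality $n<n_1$. I would also need to handle the case where $G$ does not change sign separately, confirming that $m$ is then monotone increasing on the whole interval (so its range is $[m(e_{20}),n_1]$ with $n_1=n_0$), which again yields a single root precisely when $n<n_1$ and recovers the same threshold. Finally, I should confirm that the single turning point produces a genuinely positive, sustained post-peak slope rather than a mere inflection, using the sign of $d\sigma/de$ on either side of the root of $g$ together with the initial condition $d^2\sigma/de^2|_{e_{10},e_{20}}<0$ recorded before Theorem 3.1.
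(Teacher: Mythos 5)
Your proposal is correct and follows essentially the same route as the paper: both characterize Case D as the horizontal line $n$ meeting the graph of $m(e_1,e_2)$ exactly once, use the unimodal shape of $m$ (Figure 3, from the proof of Theorem 3.1) when $G$ changes sign once, and treat the no-sign-change case by noting $m$ is then monotone with $n_0=m(e_{11},e_{21})=n_1$. The extra care you flag (root coalescence at $n=n_1$, confirming a sustained positive slope past the turning point) goes beyond the paper's brief "it is easy to see" but does not change the argument.
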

\begin{proof}
First, suppose that $G(e_1,e_2)$ changes sign once. Then
$m(e_1,e_2)$ has the characteristics shown in Figure 3. It is
obvious that a necessary and sufficient condition for the occurrence
of Case D is that there exists only one root for equation (12).
While from Figure 3, it is easy to see that a necessary and
sufficient condition is $n<n_1$. Second, suppose that $G(e_1,e_2)$
does not change sign. Since $dm/de_2>0$ (as $dm/de_2>0$ at
$e_2=e_{20}$), we have
\begin{equation}n_0=\max\limits_{e_2\in
[e_{20},\,e_{21}]}m(e_1,e_2)=m(e_{11},e_{21})=n_1.
\end{equation}
 Obviously, a
necessary and sufficient condition for $n=m(e_1,e_2)$ to have one
and only one root (i.e., $g(e_1,e_2;n)$ has one and only one zero)
is $n<n_1$. Thus we complete the proof.
\end{proof}
\begin{rem}
In this section, we derive some requirements on the constitutive
functions, together with three critical values ($n_0$, $n_1$, $n_2$)
of the size parameter.  Providing the constitutive requirements are
met, the structural response may have different behaviors for $n$ in
different intervals according to the above critical values. Thus,
the results also show the important size effects.
\end{rem}
\section{Illustrative Examples}
In this section, we give two examples to illustrate the theoretical
results obtained in Section 3. The following two examples can be
referred to as two different physical processes. One can easily
check that the functions in the following examples satisfy the
conditions we have proposed, in particular, (25).

Example 1: Consider the following constitutive relations:
$$f_{11}(e_1)=\csc\frac{9\pi}{20}\sin
  (50\pi e_1),$$
  $$f_{12}(e_1)=(50\pi\csc\frac{9\pi}{20})e_1-\frac{9\pi}{20}\csc\frac{9\pi}{20}+1,$$
   $$f_{21}(e_2)=\sin (50\pi e_2),$$
$$f_{22}(e_2)=65625[\frac{1}{3}(e_2-0.03)^3-0.0004(e_2-0.03)]+0.65.$$
   Here $\sigma_{0}=1$, and we take $\sigma^\star=0.301$. The details are shown in Figure 5(a).

    We have taken $f_{12}$ being a linear function, which represents
    the physical situation that at the peak region A of the structure
    has entered the plastic state. We find the critical values of $n$ based on the theoretical analysis
in Section 3:
    $n_0=0.142$, $n_1=0.0158$, $n_2=0.110$ (a bound for $n$ found from inequality (29)).
    Specifically, Case A occurs if $n>0.142$; Case B occurs if $0.110<n\leq
    0.142$; Case C occurs if $0.0158\leq n\leq0.110$; Case D occurs
    if $n<0.0158$. Accordingly, by taking $n$ to be in different intervals, we get
the four cases as we have predicted in Section 3. They are shown in
Figure 5(b).

To reflect the size effect on the localization zone, curves of the
width of the localization zone in the current configuration versus
the total elongation are shown in Figure 6. This width is denoted by
$d$, whose expression is given by $d=n(1+e_2)$. Here, for the
purpose of clearness, we have used different scales for different
curves. It can be seen that this width increases slowly in the
pre-peak region and increases rapidly in the post-peak region. For
the stable softening case ($n=0.167$), there is only one value of
$d$ for a given $e$. For the snap-back case ($n=0.0156$), there are
two values of $d$ for a given $e$. Also, $d$ increases very fast, as
$e$ decreases in the post-peak region. For the two snap-through
cases ($n=0.1$ and $n=0.125$), there are three values of $d$ for $e$
in some intervals. Thus, $d$ may jump from a small value to a large
value for $e$ in these intervals, i.e., the localization zone may
suddenly widen. Thus, the size parameter $n$ has an important
influence on the localization zone.
\begin{figure}
   \centering
  \begin{minipage}{0.45\textwidth}
     \centering \includegraphics[width=70mm,height=48mm]{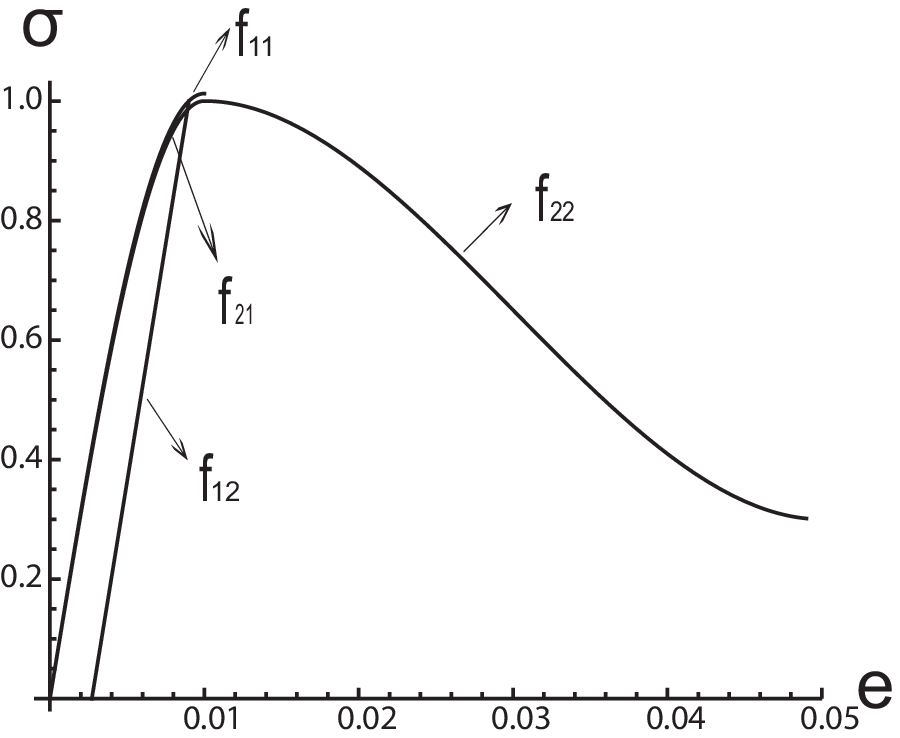}\\(a)
  \end{minipage}
  \begin{minipage}{0.45\textwidth}
     \centering \includegraphics[width=70mm,height=48mm]{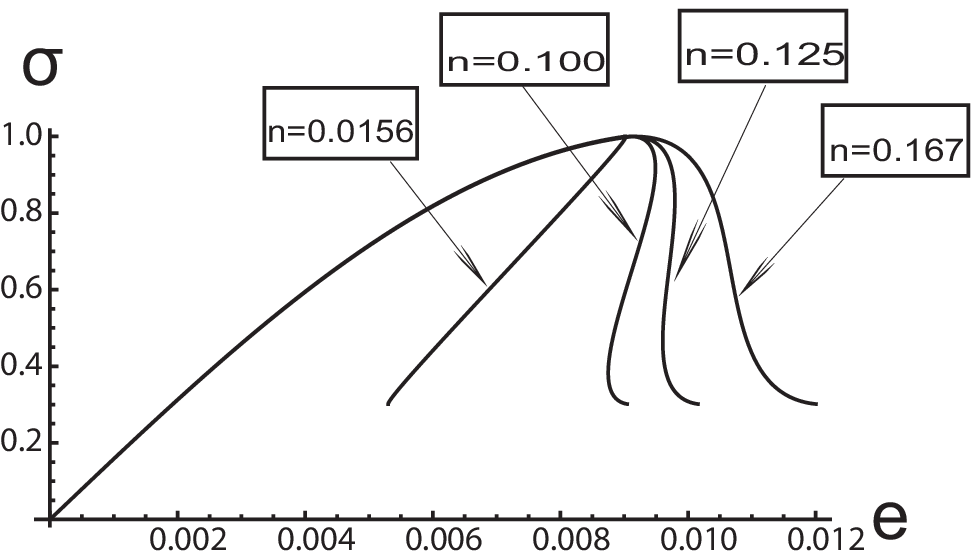}\\(b)
  \end{minipage}
  \caption{(a) The constitutive curves of Example 1;
    (b)The engineering stress-strain curves for different $n$ in Example 1. }
\end{figure}
\begin{figure}
 \begin{center} 
  \includegraphics[width=8cm]{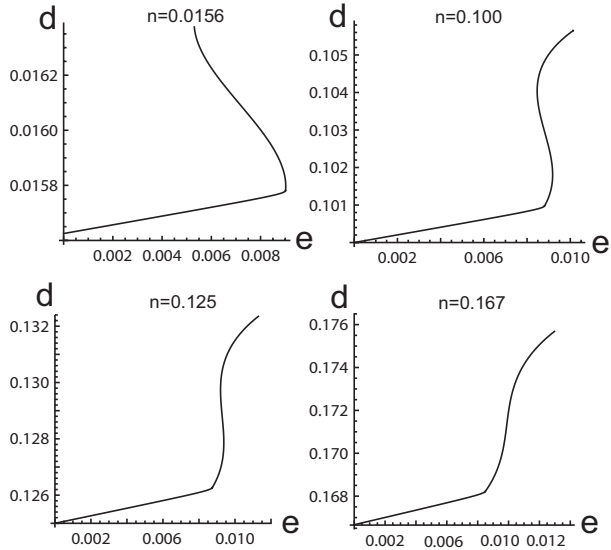}\\
  \caption{ The $d-e$ curves for different $n$ in Example 1}\label{figure 6}
\end{center}
\end{figure}

Example 2: In this example, region A is assumed to be in nonlinear
elasticity (loading or unloading), so the constitutive functions
$f_{11}$ and $f_{12}$ are the same. The constitutive relations are
listed below.
$$f_{11}(e_1)=f_{12}(e_1)=-11000(e_1-0.01)^2+1.1,$$
$$f_{21}(e_2)=-\frac{10^6}{81}(e_2-0.009)^2+1,$$
$$f_{22}(e_2)=65625[\frac{1}{3}(e_2-0.029)^3-0.0004(e_2-0.029)]+0.65.$$
Here $\sigma_0=1$, and we take $\sigma^\star=0.333$. The details are
shown in Figure 7(a). Critical values of $n$ are: $n_0=0.174$,
$n_1=0.0614$, $n_2=0.136$ (a bound for $n$ found from inequality
(26)). The intervals for different cases are: Case A occurs if
$n>0.174$; Case B occurs if $0.136<n\leq 0.174$; Case C occurs if
$0.0614\leq n\leq 0.136$; Case D occurs if $n<0.0614$. The curves
for $n$ taking four values in these four different intervals are
shown in Figure 7(b), which agree with our theoretical predictions
in Section 3. Curves of the width of the localization zone in the
current configuration versus the total elongation are also shown
(see Figure 8). Once again, from these curves, one can see the
important influence of the size parameter $n$ on the localization
zone.
\begin{figure}
   \centering
  \begin{minipage}{0.45\textwidth}
     \centering \includegraphics[width=70mm,height=45mm]{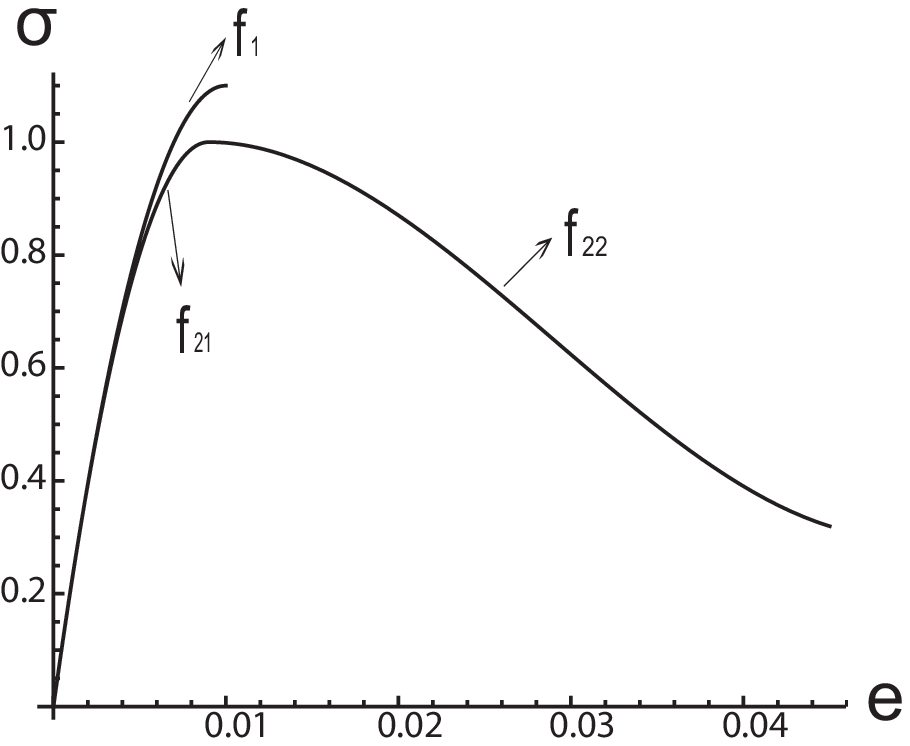}\\(a)
  \end{minipage}
  \begin{minipage}{0.45\textwidth}
     \centering \includegraphics[width=70mm,height=45mm]{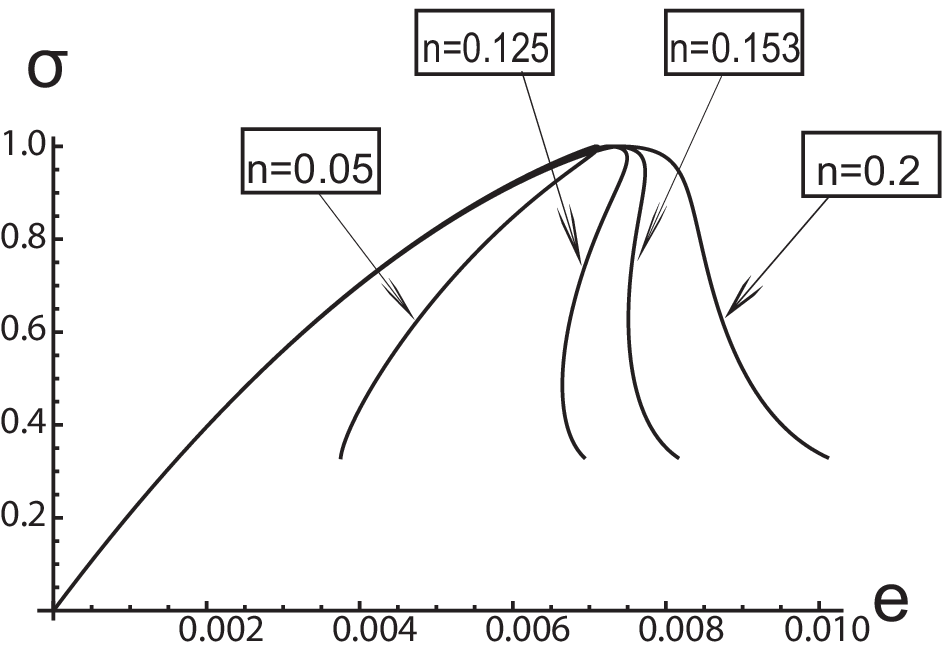}\\(b)
  \end{minipage}
  \caption{(a)
The constitutive curves of Example 2; (b) The engineering
stress-strain curves for different $n$ in Example 2.}
\end{figure}
\begin{figure}
 \begin{center} 
  \includegraphics[width=8cm]{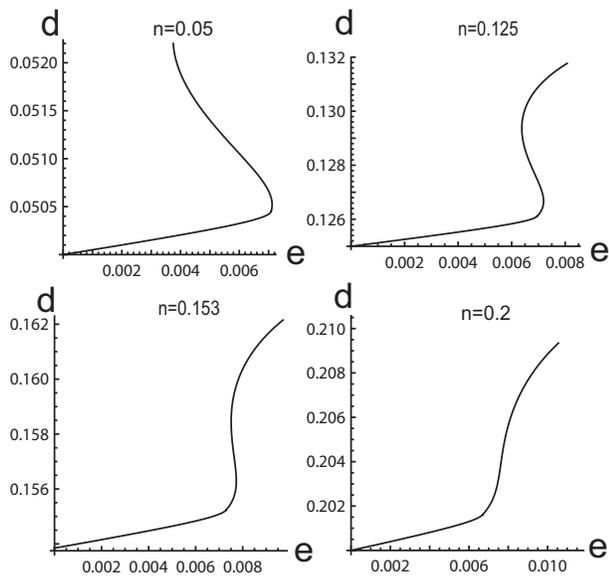}\\
  \caption{ The $d-e$ curves for different $n$ in Example 2}\label{figure 8}
\end{center}
\end{figure}
\section{Concluding Remarks and Future Tasks}
An analytical study is performed on the post-peak structural
response of strain-softening with localization. In a general
nonlinear setting, after taking standard mathematical analysis to
the parametric equations, we manage to handle the nonlinear and size
effects. Qualitative requirements on the constitutive functions and
quantitative requirements on the size effect are derived, especially
for the snap-through phenomenon. The results are consistent with
earlier experimental and computational results. It seems that the
four cases studied analytically here are quite representative. The
theoretical results may be of value for the verification of
computational algorithms and can shed some light on the mechanisms
of instabilities associated with strain-softening. Especially, we
have shown that the convexity change is a necessary condition for
the snap-through phenomenon. As softening with localization is
important for understanding the failure evolution in structures,
future work will focus on considering structures with different
configurations.
\section*{Acknowledgement}
The work described in this paper is supported by a grant from City
University of Hong Kong (Project No. 7002366), the National Natural
Science Foundation of China (Nos. 10721062, 90715037, 10728205 and
10902021), the Program for Changjiang Scholars and Innovative
Research Team in University of China (PCSIRT), the 111 Project (No.
B08014) and the National Key Basic Research Special Foundation of
China (No. 2010CB832704).






\end{document}